\newtheoremstyle{myplain}
{\topsep}   %
{\topsep}   %
{\itshape}  %
{0pt}       %
{\bfseries} %
{}         %
{5pt plus 1pt minus 1pt} %
{}          %
\theoremstyle{myplain}
\newtheorem{theorem}{Theorem}
\newtheorem{proposition}[theorem]{Proposition}
\newtheorem{lemma}[theorem]{Lemma}
\newtheoremstyle{mydefinition}
{\topsep}   %
{\topsep}   %
{}  %
{0pt}       %
{\bfseries} %
{}         %
{5pt plus 1pt minus 1pt} %
{}          %
\theoremstyle{mydefinition}
\newtheorem{definition}[theorem]{Definition}
\newtheorem*{definition*}{Definition}
\newcommand{\change}[1]{#1}
\renewcommand{\vec}[1]{\ensuremath{\mathbf{#1}}}
\title{\LARGE \bf
Uniform Detectability of Linear Time Varying Systems with Exponential Dichotomy
}
\author{
	Markus Tranninger$^1$, Richard Seeber$^2$, Martin Steinberger$^1$, and Martin Horn$^{1,2}$
	\thanks{$^1$ Institute of Automation and Control, Graz University of Technology, 8010 Graz, Austria.
		{\tt\small markus.tranninger@tugraz.at}}
	\thanks{$^2$ Christian Doppler Laboratory for Model Based Control of Complex Test Bed Systems, Institute of Automation and Control, Graz University of Technology, Graz, Austria.
		{\tt\small richard.seeber@tugraz.at}}
	\thanks{This work was partially supported by the Graz University of Technology LEAD project ``Dependable Internet of Things in Adverse Environments''. The financial support by the Christian Doppler Research Association, the Austrian Federal Ministry for Digital and Economic Affairs and the National Foundation for Research, Technology and Development is gratefully acknowledged.}
}
\begin{document}

\maketitle
\thispagestyle{empty}
\pagestyle{empty}

\begin{abstract}
Exponential dichotomies play a central role in stability theory for dynamical systems. 
They allow to split the state space into two subspaces, where all trajectories in one subspace decay whereas all trajectories in the other subspace grow, uniformly and exponentially.
This paper studies uniform detectability and observability notions for linear time varying systems, which admit an exponential dichotomy. 
The main contributions are necessary and sufficient detectability conditions for this class of systems.
\end{abstract}

\section{Introduction}

Detectability and observability are essential concepts in the observer design for dynamical systems. 
In the linear time invariant case, it is well known that detectability is equivalent to asymptotic stability of the unobservable part of the system. 
For time varying systems, it is harder to obtain such insights, and attempts to extend this characterization were recently made in~\cite{Frank2018,Bocquet2017,Tranninger2020}.
The main difficulty is that detectability is inherently linked to the uniformity of stability properties of the underlying observer error dynamics, see, e.g.~\cite{Tranninger2020,Kalman1961}. 
In the linear time invariant case, exponential and uniform exponential stability notions coincide. They differ significantly and show different robustness properties, however, in the time varying case~\cite{Anderson2013}. 

For periodic, analytic and so-called constant rank systems, a smooth Kalman decomposition exists and allows to split the state space into an observable and unobservable part~\cite{Tai1987,Kalman1962}. 
Hence, similar insights as in the time invariant case are possible in these cases, but the system classes are quite restrictive. 
In the general linear time varying case, the desired system decomposition is hard to achieve. 
Moreover, it might be not continuous, i.e., the dimensions of the resulting subsystems change over time. 
Detectability conditions for linear time varying systems are rarely studied in literature. 
In~\cite[Lemma 3.4]{Ravi1992}, detectability is linked to the existence of uniformly bounded solutions of the filter Riccati equation, which guarantees the existence of an observer~\cite{Kalman1961}.
This does not allow to obtain the same insights as in the time invariant case, however.

Exponential dichotomies allow a different splitting of the state space into two invariant subspaces. 
All solutions in one subspace decay exponentially and uniformly with respect to the initial time (stable subspace), whereas the solutions in the other subspace grow uniformly with an exponential rate (unstable subspace). 

Detectability or the dual concept stabilizability for systems with exponential dichotomy are rarely studied in literature. 
The only work the authors are aware of is~\cite{Ilchmann1987}. 
There, the concept of controllability into subspaces is introduced and it is shown that controllability into the stable subspace is necessary and sufficient for stabilizability.
Recently, numerical tools, which allow to determine if a system has an exponential dichotomy and compute the splitting numerically were presented in~\cite{Dieci2010a,Dieci2011}. 
This motivates further investigations on detectability for systems with exponential dichotomy in order to obtain computable detectability conditions.
The present work is a step towards this goal.

\change{
The main result is a necessary and sufficient detectability condition for systems with exponential dichotomy in block triangular form.
This system class is relevant, because the continuous QR-decomposition allows a well-conditioned transformation of a given linear system to upper triangular form
and, moreover, the numerical tools in order to detect an exponential dichotomy are based on this upper triangular form~\cite{Dieci2011}.
In the course of proving the main result, it is furthermore shown that for anti-stable systems, uniform complete observability is necessary and sufficient for the considered detectability property.
}

The paper is structured as follows. 
\Cref{sec:problemstatement} presents the considered problem statement in detail.
The concept of an exponential dichotomy and the required observability notions are recalled in~\Cref{sec:EDbasics}. 
The main result, i.e., necessary and sufficient detectability conditions for systems with exponential dichotomy in block triangular form are presented in~\Cref{sec:Detectability}.
Moreover, the required tools to \change{prove} the main result are established. 
\Cref{sec:conclusion} discusses the achieved results and points out possible applications and future research directions.

\emph{Notation:} Matrices are printed in boldface capital letters, whereas column vectors are \change{boldface} lower case letters. 
The vector $\q_i$ denotes the $i$-th column of the Matrix $\Q$ whose entries are $q_{ij}$. 
The matrix $\I_n$ is the $n\times n$ identity matrix.
The 2-norm of a vector or the corresponding induced matrix norm is denoted by $\|\cdot\|$.
Symmetric positive definite (positive semidefinite) matrices $\M^\transp = \M$ are denoted by $\M \succ 0$ ($\M \succeq 0$). 
If for two symmetric matrices $\M_2 - \M_1 \succ 0$ ($\succeq 0$), then $\M_1\prec \M_2$ ($\M_1\preceq \M_2$).
The unique square root of a positive semidefinite matrix $\M$ is a positive semidefinite matrix $\N$ such that $\N^2 = \M$. This square root is represented by $\M^{1/2}$.
The time derivative of a function $\x(t)$ is represented by $\dot{\x}(t)$.%

\section{Problem Statement}\label{sec:problemstatement}

The linear time varying system
\begin{subequations}\label{eq:systemED}
	\begin{align}
	\dot \x(t) &= \A(t)\x(t),\qquad\x(t)\in\mathds{R}^n, \label{eq:systemstateED}\\
	\y(t) &= \C(t)\x(t),\qquad\y(t)\in\mathds{R}^p,
	\end{align}
\end{subequations}
is considered for $t \in \mathds J =[0,\infty)$ and the matrices $\A(t)$ and $\C(t)$ of appropriate dimension are assumed to be continuous and uniformly bounded.
System~\eqref{eq:systemstateED} has the unique solution 
$\x(t) = \mtPhi(t,t_0)\x_0,$
where $\mtPhi(t,t_0)$ is the state transition matrix, $\x(t_0)=\x_0\in\mathds{R}^n$ the initial state and $t_0\in\mathds J$ the considered initial time. 
This state transition matrix can be obtained by the associated fundamental matrix differential equation
\begin{equation}\label{eq:fundamental_prel}
\dot{\X}(t) = \A(t) \X(t), \quad \X(t)\in\mathds{R}^{n\times n}.
\end{equation}
Using any solution of~\eqref{eq:fundamental_prel} with $\X(0)=\X_0$ as a non-singular matrix, the state transition matrix is given by
\begin{equation}
\mtPhi(t,t_0) = \X(t)\X^{-1}(t_0).
\end{equation}
For linear systems, stability is entirely characterized by the state transition matrix and the considered stability concept is introduced in the following.
\begin{definition}[uniform exponential stability]\label[definition]{def:UES}
	System~\eqref{eq:systemstateED} is called uniformly exponentially stable if there exists a constant $\mu >0$ and a scalar $K\geq 1$ such that 
	\begin{equation}
	\|\mtPhi(t,t_0)\| \leq K e^{-\mu(t-t_0)}
	\end{equation}
	holds for all $t_0\in\mathds J$ and all $t\geq t_0$.
\end{definition}

The overall goal of the present work is to derive conditions, which guarantee the existence of an observer in the form
\begin{equation}
\dot{\hat{\x}} (t) = \A(t) \hat{\x}(t) + \L(t)\left[\y(t) - \C(t)\hat\x(t)\right]
\end{equation}
such that the resulting estimation error dynamics 
\begin{equation}\label{eq:errorsystem}
\dot \e(t) = \left[\A(t) -\L(t)\C(t)\right]\e(t)
\end{equation}
with $\e(t) = \x(t)-\hat\x(t)$ is uniformly exponentially stable. 
This motivates the definition of uniform exponential detectability introduced in the following, see also~\cite{Ravi1992}.
\begin{definition}[uniform exponential detectability]
	System~\eqref{eq:systemED}, or equivalently the pair $(\A(t),\C(t))$ is called uniformly exponentially detectable if there exists a uniformly bounded output injection gain $\L(t)$ such that system~\eqref{eq:errorsystem} is uniformly exponentially stable. 
\end{definition}
In the following, the term detectability refers to uniform exponential detectability according to this definition.
The ultimate goal is to derive conditions, which guarantee detectability of system~\eqref{eq:systemED} under the assumption, that the system possesses an exponential dichotomy. 
The concept of an exponential dichotomy is introduced in the following.

\section{Exponential Dichotomy and Observability}\label{sec:EDbasics}
This section recalls the basic ideas of exponential dichotomies and observability of linear time varying systems.
\subsection{Exponential Dichotomy}
An exponential dichotomy is a type of conditional exponential stability for time varying linear systems. %
\change{It was introduced by O. Perron and generalizes hyperbolicity in a linear time-invariant setting, i.e., the absence of purely imaginary eigenvalues, to the time-varying case.
More details can be found in ~\cite[Ch. 5]{Kloeden2011},~\cite[Ch. 4, \parsymb 3]{daleckii2002stability} or \cite{Coppel1978,Dieci2007}.}
\change{The concept of an exponential dichotomy} is introduced in the following definition.
\begin{definition}[exponential dichotomy]\label{def:ED}
	System~\eqref{eq:systemstateED} admits an exponential dichotomy (ED) on $\mathds J$ if there exists a projection matrix $\P\in\mathds{R}^{n\times n}$, i.e., a matrix such that $\P^2 =\P$, and constants $K\geq1$ and $\alpha >0$ such that
	\begin{subequations}\label{eq:EDdef}
		\begin{align}
		\|\X(t)\P\X^{-1}(t_0)\| &\leq K e^{-\alpha(t-t_0)} &\text{ for } t\geq t_0\geq 0; \label{eq:EDstable}\\
		\|\X(t)(\I_n-\P)\X^{-1}(t_0)\| & \leq K e^{\alpha (t-t_0)} &\text{ for } 0\leq t\leq t_0,\label{eq:EDunstable}
		\end{align} 
	\end{subequations}
	for some fundamental matrix solution $\X(t)$.
\end{definition}
\change{Under an exponential dichotomy, the set of all solutions $\mathcal S=\{\X(t)\bm\xi:\bm\xi\in \mathds R^n\}$ of system~\eqref{eq:systemstateED} can be written as the direct sum of the sets $\mathcal S_1=\{\X(t)\P\bm\xi:\bm\xi \in \mathds R^n\}$ and \mbox{$\mathcal S_2=\{\X(t)(\I_n-\P)\bm\xi:\bm\xi \in\mathds R^n\}$}.
Moreover, every solution $\x(t)$ in $\mathcal S_1$ satisfies \mbox{$\|\x(t)\|\le K e^{-\alpha(t-t_0)}\|\x(t_0)\|$}, while every solution $\x(t)$ in $\mathcal S_2$ satisfies $\|\x(t)\|\ge K' e^{\alpha(t-t_0)}\|\x(t_0)\|$ for all $t\ge t_0\ge 0$ and some $K'>0$, see~\cite[p. 11]{Coppel1978}.} %
System~\eqref{eq:systemstateED} is uniformly exponentially stable if and only if it admits an exponential dichotomy with $\P=\I_n$, because then~\eqref{eq:EDstable} reduces to a bound on the state transition matrix $\mtPhi(t,t_0)=\X(t)\X^{-1}(t_0)$, which coincides with the definition of uniform exponential stability, see~\Cref{def:UES}.

It should be remarked that any projection matrix of rank $k\leq n$ is similar to the projection matrix
\begin{equation}\label{eq:Pidentity}
\P = \begin{bmatrix} \vec 0_{(n-k)\times(n-k)} & \vec 0 \\ \vec 0 & \I_k\end{bmatrix},
\end{equation}
see~\cite{Coppel1978}.
Hence, without loss of generality, it can be assumed that $\P$ is already in the form of~\eqref{eq:Pidentity}.
To see this, let~\eqref{eq:systemED} have an exponential dichotomy with some fundamental matrix $\X(t)$ and a projection matrix $\tilde\P$ with $\rank \tilde\P=k$. 
For a suitable similarity transformation matrix $\U$ it follows that $\P = \U^{-1} \tilde\P \U$, where $\P$ is in the form~\eqref{eq:Pidentity}.
By rewriting~\eqref{eq:EDstable}, one directly obtains
\begin{equation}
\begin{aligned}
&\|\X(t)\U\P\U^{-1}\X^{-1}(t_0)\| = \\&\|\Y(t)\P\Y^{-1}(t_0) \|  \leq K e^{-\alpha(t-t_0)} \text{  for } t\geq t_0\geq 0;
\end{aligned}
\end{equation}
with the new fundamental solution $\Y(t) = \X(t)\U$. 
This analogously holds for~\eqref{eq:EDunstable}.
Hence, ~\eqref{eq:systemED} also has an exponential dichotomy with a projection matrix $\P$ in the form of~\eqref{eq:Pidentity}.

A state transformation, which preserves the stability properties of the system is called a Lyapunov transformation~\cite[\change{Definition 3.1.1}]{Adrianova1995}.
It is a smooth and invertible linear change of coordinates $\x(t)=\T(t)\z(t)$ with $\T(t)$, $\T^{-1}(t)$ and $\dot{\T}(t)$ uniformly bounded for all $t\in\mathds J$.
The transformed system is given by
\begin{equation}\label{eq:Lyaptransfsystem}
\dot{ \z}(t) = \left[\T^{-1}(t)\A(t)\T(t) \change{-} \T^{-1}(t)\dot{\T}(t)\right] \z(t).
\end{equation}
Systems~\eqref{eq:systemstateED} and~\eqref{eq:Lyaptransfsystem} are also called kinematically similar.
\subsection{Observability}
An important concept for the subsequent detectability analysis is uniform complete observability, which was introduced by Kalman~\cite{Kalman1961}.

\begin{definition}[uniform complete ob\-serv\-ability]\label{def:UCC}
		The symmetric positive semidefinite $n\times n$ matrix 
	\begin{equation}\label{eq:obsv_gramian}
	\M(t_1,t_0) = \integ{s}{t_0}{t_1}{\mtPhi^\transp(s,t_0)\C^\transp(s)\C(s)\mtPhi(s,t_0)}
	\end{equation}
	is the so-called observability Gramian.
	System~\eqref{eq:systemED} is called uniformly completely observable, if there exist positive constants $\beta_1$, $\beta_2$ and $\sigma$ such that
	\begin{equation}\label{eq:UCO}
	\beta_1 \I_n \preceq \M(t_0+\sigma, t_0) \preceq \beta_2 \I_n \text{ for all }t_0\in\mathds J.
	\end{equation} 	

\end{definition}
\change{The upper bound $\beta_2$ always exists due to the boundedness assumptions on the coefficient matrices.}
Uniform complete observability of~\eqref{eq:systemED} is sufficient for detectability, see, e.g.,~\cite{Tranninger2020}.
A result presented in~\cite[Lemma 1]{Zhang2015} states that uniform complete observability is preserved under output injection.
\begin{lemma}[uniform complete observability under output injection,~\cite{Zhang2015}] \label{thm:UCOpreservation}
	The pair $(\A(t),\C(t))$ is uniformly completely observable if and only if for any bounded and integrable matrix $\L(t)$, the pair \mbox{$(\A(t)-\L(t)\C(t),\C(t))$} is uniformly completely observable.
\end{lemma}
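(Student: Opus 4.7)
\vspace{3pt}
\noindent\emph{Proof sketch.} The plan is to interpret the two observability Gramians as output energies and to show that the corresponding output trajectories are related by an invertible Volterra operator on every window $[t_0,t_0+\sigma]$, with operator norms independent of~$t_0$. The equivalence of uniform complete observability for the two pairs will then follow directly.

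I would start by writing $\x_0^\transp\M(t_0+\sigma,t_0)\x_0=\int_{t_0}^{t_0+\sigma}\|\y(s)\|^2\,ds$ with $\y(s)=\C(s)\mtPhi(s,t_0)\x_0$, and, denoting by $\vec\Psi(t,t_0)$ the state transition matrix associated with $\A-\L\C$, analogously $\x_0^\transp\tilde\M(t_0+\sigma,t_0)\x_0=\int_{t_0}^{t_0+\sigma}\|\tilde\y(s)\|^2\,ds$ with $\tilde\y(s)=\C(s)\vec\Psi(s,t_0)\x_0$. It then suffices to show a two-sided equivalence of the $L^2$-norms of $\y$ and $\tilde\y$ on $[t_0,t_0+\sigma]$, uniformly in $t_0$ and~$\x_0$.

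The link between these two outputs would come from the error trajectory $\e(t)=\mtPhi(t,t_0)\x_0-\vec\Psi(t,t_0)\x_0$, which satisfies $\dot\e=\A\e+\L\tilde\y$ with $\e(t_0)=\vec 0$. Variation of constants then gives
\begin{equation*}
\y(t)=\tilde\y(t)+\C(t)\int_{t_0}^{t}\mtPhi(t,s)\L(s)\tilde\y(s)\,ds,
\end{equation*}
together with the symmetric representation of $\tilde\y$ in terms of $\y$, where $\mtPhi$ is replaced by $\vec\Psi$. Using uniform boundedness of $\A$, $\C$, $\L$ on $\mathds J$, Gronwall's inequality bounds $\|\mtPhi(t,s)\|$ and $\|\vec\Psi(t,s)\|$ on $t_0\le s\le t\le t_0+\sigma$ by a constant depending only on~$\sigma$. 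Consequently the Volterra operators above and their inverses are bounded on $L^2([t_0,t_0+\sigma])$ with norms independent of~$t_0$, which yields constants $c_1,c_2>0$ satisfying
\begin{equation*}
c_1\tilde\M(t_0+\sigma,t_0)\preceq\M(t_0+\sigma,t_0)\preceq c_2\tilde\M(t_0+\sigma,t_0)\quad\text{for all }t_0\in\mathds J.
\end{equation*}
This delivers the equivalence of the lower bounds in~\eqref{eq:UCO}; the upper bounds always hold by the remark following~\Cref{def:UCC}. The converse implication will follow by applying the same argument to the pair $(\A-\L\C,\C)$ with output injection $-\L$.

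The main obstacle I anticipate is obtaining \emph{uniform} estimates, since the Volterra kernels depend on~$t_0$. This is handled by the fact that the window length $\sigma$ is fixed and all coefficient matrices are uniformly bounded on $\mathds J$, so the Gronwall bound, and hence the Volterra operator norms, depend only on $\sigma$ and the global bounds of $\A$, $\C$, and $\L$.
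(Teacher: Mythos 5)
Your argument is correct, and it is worth noting that the paper does not actually prove this lemma --- it is imported verbatim from \cite[Lemma 1]{Zhang2015} --- so your proposal supplies a self-contained proof rather than an alternative to one given here. The core identity you use, namely that the difference $\e(t)=\mtPhi(t,t_0)\x_0-\vec\Psi(t,t_0)\x_0$ obeys $\dot\e=\A\e+\L\tilde\y$ with $\e(t_0)=\vec 0$ and hence yields the pair of mutually inverse Volterra relations between $\y$ and $\tilde\y$, is exactly the variational relation the paper itself exploits later in~\eqref{eq:UCOproof1} inside the proof of \Cref{thm:UCOisdetectability}; your packaging of it as a two-sided $L^2$ equivalence of output energies, uniform in $t_0$ because the window length $\sigma$ is fixed and $\A$, $\C$, $\L$ are uniformly bounded, is clean and gives the Gramian sandwich $c_1\tilde\M\preceq\M\preceq c_2\tilde\M$ directly. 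Two small remarks: you do not actually need to invert any operator, since you already have both explicit representations ($\y$ in terms of $\tilde\y$ via $\mtPhi$, and $\tilde\y$ in terms of $\y$ via $\vec\Psi$), so the phrase about ``inverses of the Volterra operators'' can be dropped; and the converse direction of the lemma is immediate by choosing $\L=\vec 0$, which is even simpler than re-running the argument with injection $-\L$.
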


According to~\Cref{thm:UCOpreservation} and \cite[Lemma 2]{Zhang2015}, this also guarantees the existences of constants $\bar \beta_1,\, \bar \beta_2>0$ such that
\begin{equation}
\bar \beta_1\I_n \preceq \integ{s}{t_0}{t_0+\sigma}{ \mtPhi_e^\transp(s,t_0)\C^\transp(s) \C(s) \mtPhi_e(s,t_0)} \preceq \bar \beta_2\I_n
\end{equation}
holds for the same $\sigma$ and for all $t_0\in\mathds J$, where the matrix $\mtPhi_e(\cdot,\cdot)$ is the state transition matrix of~\eqref{eq:errorsystem}.

\section{Uniform Exponential Detectability of Systems with Exponential Dichotomy}\label{sec:Detectability}
In the following, mainly systems in triangular or block triangular form are considered. 
This is motivated by the fact that \change{uniform exponential detectability is invariant with respect to Lyapunov transformations.
Moreover,} every linear system is similar to an upper triangular system by an orthogonal change of coordinates introduced in the following together with the main result.

\subsection{Necessary and Sufficient Condition for Uniform Exponential Detectability}
Consider $\X(t)$ as the fundamental solution in~\eqref{eq:EDdef}.
Based on Perron's lemma~\change{\cite[Theorem 3.3.1 \& Remark 3.3.2]{Adrianova1995}}, there exists an orthogonal Lyapunov transformation \mbox{$\R(t)=\Q^\transp(t) \X(t)$} such that $\R(t)$ is upper triangular with a positive diagonal. 
The transformation to upper triangular form can be obtained by means of the continuous QR decomposition~\change{\cite{Dieci2007}}.
The matrices $\Q$ and $\R$ are the solutions of the differential equations
\begin{align}
\dot{\R}(t) &=  \B(t)\R(t),   &\B &= \Q^\transp \A \Q-\S, \label{eq:fullqr1}\\
\dot{\Q}(t) &= \Q(t)\S(t),   &\S&=-\S^\transp,\; \label{eq:fullqr2}
\end{align}
with skew-symmetric matrix $\S$ according to \mbox{$s_{ij} = \q_i^\transp \A \q_j$},  $i>j$, $\Q = [\q_1, \ldots, \q_n]$ and a bounded upper triangular matrix $\B$. 
The initial  condition is $\X(0) = \Q(0)\R(0)$, where $\Q(0)$ and $\R(0)$ can be obtained by the QR-decomposition. 
Note that $\Q$ and $\R$ are uniquely defined if the diagonal of $\R$ is positive~\cite{Dieci2007}.

The transformation to upper triangular form is especially useful for determining if a given system possesses an exponential dichotomy. 
Algorithms for the computation of the projection matrix $\P$ are proposed in~\cite{Dieci2010a,Dieci2011}.

If system~\eqref{eq:systemED} possesses an exponential dichotomy with a fundamental solution $\X(t)$ and a projection matrix $\P$ in the form~\eqref{eq:Pidentity}, then also the transformed system $\dot \z(t) = \B(t)\z(t)$ with $\z(t) = \Q^\transp(t)\x(t)$ has an exponential dichotomy with the same projection matrix $\P$ and the fundamental matrix solution $\R(t) = \Q^\transp(t) \X(t)$.
This can be seen by using $\X(t) = \Q(t)\R(t)$ in the definition of an exponential dichotomy~\eqref{eq:EDdef}. 
Considering~\eqref{eq:EDstable}, this yields
\begin{equation}\label{eq:EDtransf_presv}
\begin{aligned}
&\|\Q(t)\R(t)\P\R^{-1}(t_0)\Q^\transp(t_0)\| = \\
& \|\R(t)\P\R^{-1}(t_0)\| \leq K e^{-\alpha(t-t_0)} \text{ for } t\geq t_0\geq 0,
\end{aligned}
\end{equation}
which follows directly from the orthogonality of $\Q$. 
An analogous statement holds for~\eqref{eq:EDunstable}.
This allows to state the main result directly for the class of block upper triangular systems.

\begin{theorem}[detectability of block triangular systems]\label{thm:mainresult}
	Let a system $\dot \x(t) = \B(t)\x(t),\;\x(t)\in\mathds R^n$ have a block triangular structure partitioned according to 
	\begin{subequations}\label{eq:blocktriuED}
		\begin{equation}\label{eq:EDblocktriu2}
		\begin{bmatrix}\dot \x_1(t)\\ \dot \x_2(t) \end{bmatrix} = \begin{bmatrix}\B_{11}(t) & \B_{12}(t) \\ \vec 0 & \B_{22}(t) \end{bmatrix} \begin{bmatrix}\x_1(t) \\ \x_2(t) \end{bmatrix}
		\end{equation}
		with $\x_1(t)\in\mathds R^{n-k}$ and $\x_2(t)\in\mathds R^k$.
		The system output is given by
		\begin{equation}
		\y(t) = \begin{bmatrix}\C_1(t) & \C_2(t) \end{bmatrix}\begin{bmatrix}\x_1(t) \\ \x_2(t) \end{bmatrix}.
		\end{equation}
	\end{subequations}
	The block matrices $\B_{11}(t)$, $\B_{12}(t)$, $\B_{22}(t)$, $\C_1(t)$ and $\C_2(t)$ are uniformly bounded matrices of appropriate dimensions. 
	It is assumed, that~\eqref{eq:EDblocktriu2} possesses an exponential dichotomy with $\P$ in the form of~\eqref{eq:Pidentity}.
	
	Then, system~\eqref{eq:blocktriuED} is uniformly exponentially detectable if and only if the pair $(\B_{11}(t),\C_1(t))$ is uniformly completely observable.
\end{theorem}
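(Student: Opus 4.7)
The plan is to reduce~\eqref{eq:blocktriuED} to a block diagonal system via a Lyapunov change of coordinates and then invoke the companion result, established separately in this section, that for anti-stable systems uniform complete observability is equivalent to uniform exponential detectability. Since $\P$ is in the form~\eqref{eq:Pidentity}, the set $\{\x_2 = \vec 0\}$ is the unstable invariant subspace of the dichotomy, so the restricted dynamics $\dot{\x}_1 = \B_{11}\x_1$ are anti-stable and the stable-subspace dynamics reduce to $\dot{\x}_2 = \B_{22}\x_2$, which is UES. Matching~\eqref{eq:EDdef} with the block triangular fundamental matrix of~\eqref{eq:EDblocktriu2} yields the estimates $\|\mtPhi_{11}(t,s)\|\le c_1 e^{-\alpha(s-t)}$ and $\|\mtPhi_{22}(s,t)\|\le c_2 e^{-\alpha(s-t)}$ for $s\ge t$. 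These bounds make the Sylvester equation $\dot{\T}=\B_{11}\T-\T\B_{22}-\B_{12}$ solvable by the convergent integral
\begin{equation*}
\T(t) = \integ{s}{t}{\infty}{\mtPhi_{11}(t,s)\B_{12}(s)\mtPhi_{22}(s,t)},
\end{equation*}
yielding a bounded $\T(t)$ with bounded derivative, so that $\tilde{\x}_1 = \x_1+\T(t)\x_2$, $\tilde{\x}_2 = \x_2$ is a Lyapunov transformation. It brings the system into block diagonal form with output $\y=\C_1\tilde{\x}_1+\tilde{\C}_2\tilde{\x}_2$, $\tilde{\C}_2=\C_2-\C_1\T$. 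Because Lyapunov transformations preserve uniform exponential detectability and leave $\mtPhi_{11}$ unchanged, it suffices to prove the theorem in this decoupled setting.

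Sufficiency is then straightforward: if $(\B_{11},\C_1)$ is uniformly completely observable, the anti-stable companion result furnishes a bounded $\L_1(t)$ with $\B_{11}-\L_1\C_1$ UES, and the output injection $\L=[\L_1^\transp,\vec 0^\transp]^\transp$ produces a block upper triangular closed-loop error matrix with UES diagonal blocks $\B_{11}-\L_1\C_1$ and $\B_{22}$, hence UES overall by a standard cascade argument.

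For necessity, assume the decoupled system is detectable via a bounded $\L$, so that the error transition $\mtPhi_e$ satisfies $\|\mtPhi_e(t,t_0)\|\le Ke^{-\mu(t-t_0)}$, and argue by contradiction. If $(\B_{11},\C_1)$ were not UCO, then for any $\sigma,\beta_1>0$ there would exist $t_0\ge 0$ and a unit vector $\vec v_1$ with $\integ{s}{t_0}{t_0+\sigma}{\|\C_1(s)\mtPhi_{11}(s,t_0)\vec v_1\|^2}<\beta_1$. The initial state $\tilde{\x}(t_0)=[\vec v_1^\transp,\vec 0^\transp]^\transp$ lies in the unstable subspace, so $\tilde{\x}_2(t)\equiv\vec 0$ and $\y(t)=\C_1(t)\mtPhi_{11}(t,t_0)\vec v_1$ has small $L^2$ energy on $[t_0,t_0+\sigma]$. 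Initializing the observer at zero, a variation-of-parameters bound on $\hat{\tilde{\x}}(t)=\integ{s}{t_0}{t}{\mtPhi_e(t,s)\L(s)\y(s)}$ combined with Cauchy--Schwarz yields $\|\hat{\tilde{\x}}(t_0+\sigma)\|\le c\sqrt{\beta_1}$ with $c$ depending only on $K$, $\mu$ and $\sup_t\|\L(t)\|$. On the other hand, anti-stability of $\B_{11}$ gives $\|\tilde{\x}(t_0+\sigma)\|\ge K'e^{\alpha\sigma}$ while UES of the error yields $\|\e(t_0+\sigma)\|\le Ke^{-\mu\sigma}$, so that $\|\hat{\tilde{\x}}(t_0+\sigma)\|\ge K'e^{\alpha\sigma}-Ke^{-\mu\sigma}$. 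Fixing $\sigma$ large enough to push this lower bound above $1$ and then choosing $\beta_1<1/c^2$ contradicts the upper bound, forcing $(\B_{11},\C_1)$ to be UCO.

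The main technical obstacle is the decoupling step: one must extract the separate quantitative anti-stability of $\mtPhi_{11}$ and UES of $\mtPhi_{22}$ from the ED bounds, which are stated on the full triangular fundamental matrix, and then verify the uniform boundedness of $\T$ and $\dot{\T}$ so that the transformation is genuinely of Lyapunov type. With this reduction secured, sufficiency reduces to a short cascade argument and necessity to the quantitative contradiction sketched above.
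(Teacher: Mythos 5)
Your proposal is correct and, at the top level, follows the same route as the paper: block-diagonalize by a Lyapunov transformation whose off-diagonal block is the bounded solution of a Sylvester equation (your $\T(t)$ is exactly $-\X_{12}(t)\X_{22}^{-1}(t)$ from the proof of \Cref{thm:EDblocktriu}), note that this leaves both $\B_{11}$ and the $\C_1$-block of the output untouched, and settle sufficiency by the cascade argument with $\L=[\L_1^\transp\;\vec{0}^\transp]^\transp$. Where you genuinely depart from the paper is the necessity step for the anti-stable part. The paper's \Cref{thm:UCOisdetectability} first invokes \Cref{thm:UCOpreservation} to transfer the failure of uniform complete observability to the closed-loop pair $(\A-\L\C,\C)$ and then contradicts the identity $\mtPhi(t_1,t_2)\mtPhi_e(t_2,t_1)=\I-\int_{t_1}^{t_2}\mtPhi(t_1,s)\L(s)\C(s)\mtPhi_e(s,t_1)\,ds$ with the specific choices $\sigma=\ln(3KK_e)/\mu$ and $\beta_1=(9K^2K_L^2\sigma)^{-1}$. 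You instead run the contradiction directly on states: an unstable-subspace initial condition with small output energy keeps the zero-initialized observer state at $O(\sqrt{\beta_1})$ while the plant state grows like $e^{\alpha\sigma}$, which is incompatible with uniform exponential decay of the error. This is self-contained (it avoids the output-injection invariance of UCO entirely), and the quantifier order works because your constant $c$ is independent of $\sigma$. The one place your sketch is thinner than the paper is the claim that the ED of the full triangular system with $\P$ as in~\eqref{eq:Pidentity} yields the separate estimates on $\mtPhi_{11}$ and $\mtPhi_{22}$; ``matching'' the bounds is not immediate because the fundamental solution realizing the dichotomy need not be the triangular one, and the paper outsources precisely this equivalence to \cite[Theorem 24]{Dieci2010a} --- you should cite or prove that step explicitly.
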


In order to \change{prove} this result, additional tools for the analysis of systems with exponential dichotomy are presented in the following.

\subsection{Anti-Stable and Block Diagonal Systems}

An important property of systems with an exponential dichotomy is the so-called reducibility to block diagonal form~\change{\cite[Chapter 5]{Coppel1978}}. 
Unlike the transformation to triangular form, this transformation is possibly badly conditioned numerically.
Nevertheless, this property will be used in the following sections to obtain insight into the introduced detectability condition. 
\begin{definition}[reducibility]
	System~\eqref{eq:systemstateED} is reducible to block diagonal form with dimension $k$, if there exists a Lyapunov transformation $\x(t)=\S(t)\z(t)$, which transforms~\eqref{eq:systemstateED} to the block diagonal system
	\begin{equation}\label{eq:EDblockdiagonal}
	\dot \z(t) = \D(t) \z(t),  \qquad 	\D(t) = \begin{bmatrix} \D_1(t) &\vec 0\\ \vec 0  & \D_2(t)\end{bmatrix},
	\end{equation}
	where $\D_2(t)$ is a $k\times k$ matrix.
\end{definition}
A result presented in~\cite{Coppel1967}, which guarantees reducibility to block diagonal form for systems with exponential dichotomy, is summarized in the following.
\begin{lemma}[reducibility for systems with exponential dichotomy]
	Let system~\eqref{eq:systemstateED} have an exponential dichotomy with $\P$ in the form of~\eqref{eq:Pidentity} and a corresponding fundamental matrix solution $\X(t)$. 
	Then, there exists a Lyapunov transformation 
	\begin{equation}\label{eq:EDtransf1}
	\S(t) = \X(t)\T^{-1}(t)
	\end{equation}
	with a symmetric positive definite $\T(t)$ such that
	\begin{equation}\label{eq:EDtransf2}
	\T^2 (t)= \P\X^\transp(t)\X(t)\P + (\I-\P)\X^\transp(t)\X(t)(\I-\P).
	\end{equation}
	This transformation reduces~\eqref{eq:systemstateED} to the block diagonal system~\eqref{eq:EDblockdiagonal} with $\D_2(t)$ as a matrix of dimension $k\times k$. 
	Moreover, $\dot \z(t)=\D(t)\z(t)$ has an exponential dichotomy with the same projection matrix $\P$ \change{and the same $\alpha$}.
\end{lemma}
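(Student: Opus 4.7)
The plan is to establish three assertions in turn: that $\T(t)$ defined through \eqref{eq:EDtransf2} exists and makes $\S(t)=\X(t)\T^{-1}(t)$ a Lyapunov transformation; that the resulting system $\dot\z=\D(t)\z$ has the claimed block diagonal structure with $\D_2(t)$ of dimension $k\times k$; and that this system inherits an exponential dichotomy with the same projection $\P$ and the same exponent $\alpha$. The first observation is that with $\P$ in the form \eqref{eq:Pidentity}, the two summands on the right of \eqref{eq:EDtransf2} are block diagonal matrices whose non-zero blocks are $\X_2^\transp\X_2$ and $\X_1^\transp\X_1$, respectively, where $[\X_1,\X_2]$ denotes the column partition of $\X(t)$ compatible with $\P$. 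Since $\X(t)$ is non-singular, both blocks are symmetric positive definite, so $\T^2(t)$ is block diagonal positive definite, and its unique symmetric positive definite square root $\T(t)$ inherits the same block structure and, in particular, commutes with $\P$.

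To show that $\S(t)$ is a Lyapunov transformation, I would first note that the block columns $\S_i=\X_i(\X_i^\transp\X_i)^{-1/2}$ of $\S$ each have orthonormal columns, which immediately yields $\|\S(t)\|\leq\sqrt{2}$. For $\|\S^{-1}(t)\|$, the crucial input is that setting $t_0=t$ in \eqref{eq:EDstable} gives $\|\X(t)\P\X^{-1}(t)\|\leq K$; since $\X(t)\P\X^{-1}(t)=\S(t)\P\S^{-1}(t)$ is the oblique projection onto the stable subspace along the unstable one, this bound is equivalent to a uniform positive lower bound on the sine of the principal angle between the column spaces of $\S_1$ and $\S_2$, which in turn yields a uniform bound on $\|\S^{-1}(t)\|$. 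Bounding $\|\dot\S(t)\|$ is the main technical difficulty, since $\X$ and $\T^{-1}$ individually grow or decay exponentially while only their product remains bounded; the intended argument is to differentiate the identity $\T^2\P=\P\X^\transp\X\P$, rewrite the resulting symmetric equation using $\X=\S\T$ so that the right-hand side depends only on $\A$, $\S$ and $\P$, and solve the resulting Lyapunov-type equation for $\T^{-1}\dot\T$ to obtain a uniform bound; then $\dot\S=\A\S-\S\T^{-1}\dot\T$ is bounded as well.

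The block diagonal structure of $\D(t)$ now follows from a direct computation of the state transition matrix $\bm{\Psi}(t,t_0)$ of the transformed system:
\begin{equation}
\bm{\Psi}(t,t_0)=\S^{-1}(t)\mtPhi(t,t_0)\S(t_0)=\T(t)\T^{-1}(t_0),
\end{equation}
which is block diagonal because $\T$ is; hence $\D(t)=\dot\T(t)\T^{-1}(t)$ is block diagonal with a $k\times k$ lower-right block $\D_2(t)$. Finally, for the preservation of the exponential dichotomy, take $\T(t)$ as the fundamental matrix of $\dot\z=\D(t)\z$; using that $\T$ commutes with $\P$, one obtains
\begin{equation}
\T(t)\P\T^{-1}(t_0)=\S^{-1}(t)[\X(t)\P\X^{-1}(t_0)]\S(t_0),
\end{equation}
and invoking \eqref{eq:EDstable} together with the uniform bounds on $\S$ and $\S^{-1}$ yields the stable estimate with the same exponent $\alpha$ (and a possibly larger multiplicative constant). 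The unstable estimate is obtained by the analogous substitution with $\I_n-\P$ in place of $\P$.
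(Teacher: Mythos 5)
Your proposal is correct and follows essentially the same route as the paper, which simply delegates the three key facts (boundedness of $\S$ and $\S^{-1}$, boundedness of $\dot\S$ together with block-diagonality of $\D$, and preservation of the dichotomy) to Lemmas 1--3 of Coppel's 1967 paper; your angle-between-subspaces argument, your identity $\D=\dot\T\T^{-1}$ with $\T$ block diagonal, and your sandwiching $\T(t)\P\T^{-1}(t_0)=\S^{-1}(t)\X(t)\P\X^{-1}(t_0)\S(t_0)$ are exactly the content of those citations. The only step you leave as a sketch is the bound on $\dot\S$, but the strategy you indicate is the right one: one arrives at the Lyapunov equation $\T\G+\G\T=\S^\transp(\A+\A^\transp)\S$ (blockwise) for $\G=\T^{-1}\dot\T\T^{-1}$ with $\T\succ 0$, whence $\D=\T\G$ is bounded in terms of $\A$ and $\S$ alone.
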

\change{\begin{proof}
        Since $\| \X(t)\P\X^{-1}(t) \| \leq K$, \cite[Lemma 1]{Coppel1967} and its proof guarantee that $\S(t)$ and its inverse are bounded.
        Furthermore,~\cite[Lemma 2]{Coppel1967} and its proof show that boundedness of $\A(t)$ implies that $\dot \S(t) = \A(t) \S(t) - \S(t) \D(t)$ is bounded, and that $\D(t)$ commutes with $\P$ in \eqref{eq:Pidentity}.
        Hence, \eqref{eq:EDtransf1} is a Lyapunov transformation and $\D(t)$ is block diagonal.
		The last statement then follows from~\cite[Lemma 3]{Coppel1967}.
\end{proof} }
For the two special cases $\P=\vec 0$ and $\P=\I_n$ one gets the straightforward relation $\D(t) = \A(t)$ and hence these two trivial cases are neglected in the following.
The system in block diagonal form is decoupled and the two independent systems
\begin{subequations}
	\begin{align}
	\dot \z_1(t) &= \D_1(t) \z_1(t)\label{eq:EDdiag1}\\
	\dot \z_2(t) &= \D_2(t) \z_2(t) \label{eq:EDdiag2}
	\end{align}
\end{subequations}
are of order $n-k$ and $k$, respectively. 
System~\eqref{eq:EDdiag1} is a so-called anti-stable system, because it has an exponential dichotomy with projection $\P_1=\vec 0$ and all trajectories grow uniformly and exponentially.
The second system~\eqref{eq:EDdiag2} has an exponential dichotomy with $\P_2 = \I_k$ and hence it is uniformly exponentially stable.
The following result presents a detectability condition for anti-stable systems.
\begin{proposition}[uniform exponential detectability of anti-stable systems]\label{thm:UCOisdetectability}
	Let~\eqref{eq:systemstateED} admit an exponential dichotomy with $\P=\vec 0$. 
	Then, system~\eqref{eq:systemED} is uniformly exponentially detectable if and only if it is uniformly completely observable.
\end{proposition}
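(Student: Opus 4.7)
My plan is to dispatch the easy direction first: the paragraph just before \Cref{thm:UCOpreservation} already records that uniform complete observability is sufficient for uniform exponential detectability of any linear time varying system~\eqref{eq:systemED}. The real work lies in the converse, which I would establish by a perturbation argument that exploits the anti-stability $\P=\vec 0$.

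For necessity, I would first translate the exponential dichotomy with $\P=\vec 0$ into the \emph{lower} bound $\|\mtPhi(t,t_0)\x_0\| \geq \tfrac{1}{K}\, e^{\alpha(t-t_0)}\|\x_0\|$ for $t\geq t_0\geq 0$, obtained by swapping the roles of $t$ and $t_0$ in~\eqref{eq:EDunstable} and inverting the resulting inequality; this is precisely the growth property for $\mathcal S_2$ noted after \Cref{def:ED}. Detectability then supplies a uniformly bounded injection $\L(t)$ with $L_{\max}:=\sup_t \|\L(t)\|<\infty$ such that the error dynamics~\eqref{eq:errorsystem} satisfies $\|\mtPhi_e(t,t_0)\|\leq K_e\, e^{-\mu(t-t_0)}$ for some $K_e\geq 1$ and $\mu>0$. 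Viewing $\dot\x = (\A-\L\C)\x + \L\C\x$ as a perturbation of the error dynamics yields the variation-of-constants identity
\begin{equation*}
\mtPhi(t,t_0) = \mtPhi_e(t,t_0) + \integ{s}{t_0}{t}{\mtPhi_e(t,s)\L(s)\C(s)\mtPhi(s,t_0)}.
\end{equation*}
Applying this to a unit vector $\x_0$ and combining the lower bound on the left-hand side with the exponential decay of $\mtPhi_e$ on the right yields, via the triangle inequality, a lower bound on the weighted $L^1$-norm $\integ{s}{t_0}{t}{e^{-\mu(t-s)}\|\C(s)\mtPhi(s,t_0)\x_0\|}$. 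A Cauchy--Schwarz estimate against the weight $e^{-\mu(t-s)}$ then converts this into an $L^2$-lower bound of the form $\integ{s}{t_0}{t_0+\sigma}{\|\C(s)\mtPhi(s,t_0)\x_0\|^2}\geq \beta_1 \|\x_0\|^2$, which together with the routine upper bound from the boundedness of $\A(t)$ and $\C(t)$ supplies the Gramian inequalities in~\Cref{def:UCC}.

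The main obstacle I anticipate is securing uniformity of $\sigma$ and $\beta_1$ in $t_0$. This reduces to choosing $\sigma$ so that $\tfrac{1}{K}e^{\alpha\sigma}$ dominates $K_e e^{-\mu\sigma}$ by a prescribed margin; for instance, $\sigma \geq \tfrac{1}{\alpha+\mu}\ln(2KK_e)$ guarantees $\tfrac{1}{K}e^{\alpha\sigma} - K_e e^{-\mu\sigma} \geq \tfrac{1}{2K}e^{\alpha\sigma}$. Since $K$, $K_e$, $\alpha$, $\mu$, and $L_{\max}$ are all time-uniform, so is the resulting $\sigma$, and $\beta_1$ is likewise independent of $t_0$ and $\x_0$, which closes the argument.
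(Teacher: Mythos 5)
Your proposal is correct, and the necessity direction takes a genuinely different route from the paper. The paper argues by contradiction: it writes $\mtPhi_e$ in terms of $\mtPhi$ via the variational identity, multiplies by $\mtPhi(t_1,t_2)$ (using the \emph{backward} decay $\|\mtPhi(t_1,s)\|\le K e^{-\alpha(s-t_1)}$ supplied by the dichotomy), invokes \Cref{thm:UCOpreservation} to transfer the assumed failure of uniform complete observability from $(\A,\C)$ to $(\A-\L\C,\C)$, and then contradicts uniform exponential stability of the error system with the specific choices $\sigma=\ln(3KK_e)/\mu$ and $\beta_1=(9K^2K_L^2\sigma)^{-1}$. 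You instead give a \emph{direct} proof: you use the reversed identity $\mtPhi(t,t_0)=\mtPhi_e(t,t_0)+\int_{t_0}^{t}\mtPhi_e(t,s)\L(s)\C(s)\mtPhi(s,t_0)\,ds$, pit the forward growth $\tfrac{1}{K}e^{\alpha(t-t_0)}$ of $\mtPhi$ against the decay of $\mtPhi_e$, and read off a uniform Gramian lower bound for the \emph{original} pair $(\A,\C)$ after Cauchy--Schwarz. This buys two things: you never need the output-injection invariance lemma from~\cite{Zhang2015}, and you obtain explicit, constructive constants $\sigma$ and $\beta_1$ rather than a contradiction. Both proofs exploit anti-stability in the same essential way, namely to choose $\sigma$ without any tuning of the decay rate $\mu$. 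Two cosmetic points to tidy up when writing this out in full: note that $L_{\max}>0$ necessarily holds (otherwise the error dynamics coincide with the anti-stable plant and cannot be uniformly exponentially stable), so dividing by $L_{\max}$ is legitimate; and state explicitly that the weight integral $\int_{t_0}^{t}e^{-2\mu(t-s)}\,ds\le 1/(2\mu)$ is independent of $t_0$, which is what makes $\beta_1$ uniform.
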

\begin{proof}
	Sufficiency follows, e.g., from the observer design presented in~\cite{Tranninger2020,Kalman1961}, which guarantees the existence of an observer.
	For necessity, let the system be uniformly exponentially detectable but not uniformly completely observable. 
	With the aid of the variational equation, the solution of~\eqref{eq:errorsystem} can be stated as
	\begin{equation}
	\e(t) = \mtPhi(t,t_0)\e(t_0) - \integ{s}{t_0}{t}{\mtPhi(t,s)\L(s)\C(s)\mtPhi_e(s,t_0)\e(t_0)}.
	\end{equation}
	Hence, the relation
	\begin{equation}\label{eq:UCOproof1}
	\mtPhi_e(t_2,t_1) = \mtPhi(t_2,t_1) - \integ{s}{t_1}{t_2}{\mtPhi(t_2,s)\L(s)\C(s)\mtPhi_e(s,t_1)}
	\end{equation}
	holds for all $t_1$, $t_2$. 
	Multiplication of~\eqref{eq:UCOproof1} with $\mtPhi(t_1,t_2)$ gives
	\begin{equation}\label{eq:UCOproof1f}
	\mtPhi(t_1,t_2)\mtPhi_e(t_2,t_1) = \I - \integ{s}{t_1}{t_2}{\mtPhi(t_1,s)\L(s)\C(s)\mtPhi_e(s,t_1)}.
	\end{equation}
	For \change{any} $s\geq t_1$, one can bound $\mtPhi(t_1,t_2)$ according to
	\begin{equation}
	\|\mtPhi(t_1,s)\|\leq Ke^{-\alpha(s-t_1)} \leq K
	\end{equation}
	for some $K,\;\alpha>0$, see~\eqref{eq:EDunstable}.
	By the uniform exponential detectability assumption, system~\eqref{eq:errorsystem} is uniformly exponentially stable for a bounded $\|\L(t)\|\leq K_L$ and
	\begin{equation}\label{eq:bound_error}
	\|\mtPhi_e(t_2,t_1)\| \leq K_e e^{-\mu(t_2-t_1)}
	\end{equation}
	holds for all $t_2\geq t_1$ and some $\mu>0$ and $K_e\geq 1$. 
	Together with the bounds on $\mtPhi$, $\mtPhi_e$ and $\L(t)$, the multiplication of~\eqref{eq:UCOproof1f} with some vector $\boldsymbol{\xi}$ of appropriate dimension and taking the norm on both sides results in the inequality 
	\begin{equation}\label{eq:UCOproof2}
	KK_e\|\boldsymbol{\xi}\| e^{-\mu(t_2-t_1)} \geq \|\boldsymbol{\xi}\| - KK_L \integ{s}{t_1}{t_2}{\|\C(s)\mtPhi_e(s,t_1)\boldsymbol{\xi}\|}.
	\end{equation}
	
	By assumption, the pair $(\A(t),\C(t))$ is not uniformly completely observable and hence according to~\Cref{thm:UCOpreservation}, the pair $(\A(t)-\L(t)\C(t),\C(t))$ is also not uniformly completely observable.
	By negating the statements in~\Cref{def:UCC}, it follows that for any $\sigma>0$ and any $\change{\beta_1}>0$ there exists a non-trivial vector $\boldsymbol{\eta}\change{\in\mathds R^n}$ \change{and a $t_0\in\mathds J$} such that the inequality
	\begin{equation}\label{eq:EDnotUCO}
	\boldsymbol{\eta}^\transp  \integ{s}{t_0}{t_0+\sigma}{ \mtPhi_e^\transp(s,t_0)\C^\transp(s) \C(s) \mtPhi_e(s,t_0)} \boldsymbol{\eta} < \change{\beta_1} \|\boldsymbol{\eta}\|^2
	\end{equation}
	or equivalently
	\begin{equation}\label{eq:2222}
	\integ{s}{t_0}{t_0+\sigma}{\| \C(s) \mtPhi_e(s,t_0)\boldsymbol{\eta}\|^2} < \change{\beta_1} \|\boldsymbol{\eta}\|^2
	\end{equation}
	is fulfilled.
	By applying Schwarz's inequality, 
	one obtains
	\begin{equation}\label{eq:UCOproof3}
	\integ{s}{t_0}{t_0+\sigma}{\| \C(s) \mtPhi_e(s,t_0)\bm\eta \|} < \sqrt{\change{\beta_1}\sigma}\|\bm \eta\|.
	\end{equation}
	Now, let $t_1=t_0$ and $t_2=t_0+\sigma$ in~\eqref{eq:UCOproof2} and choose $\boldsymbol\xi=\boldsymbol{\eta}$ and $t_0$ such that~\eqref{eq:EDnotUCO} is fulfilled.
	By selecting $\sigma = \frac{\ln(3KK_e)}{\mu}$ and $\change{\beta_1}=(9K^2K_L^2\sigma)^{-1}$ and combining~\eqref{eq:UCOproof3} with~\eqref{eq:UCOproof2}
	one obtains
	\begin{equation}
	\frac{1}{3}\|\boldsymbol{\eta}\| = KK_e\|\boldsymbol{\eta}\| e^{-\mu\sigma} \geq \|\boldsymbol{\eta}\| - KK_L\sqrt{\change{\beta_1}\sigma}\|\boldsymbol{\eta}\| = \frac{2}{3}\|\boldsymbol{\eta}\|.
	\end{equation}
	This is a contradiction and hence uniform exponential detectability implies uniform complete observability for anti-stable systems.
\end{proof}
The proof was inspired by the proof of Theorem 3 in~\cite{Ikeda1972}. 
There, the goal was to show that for systems with bounded coefficient matrices, the dual concept uniform complete controllability is equivalent to complete stabilizability with arbitrary decay rate. 
A key difference to the present proof is that the decay rate $\mu$ in~\cite{Ikeda1972} has to be chosen in an appropriate way.
This is avoided in the proof of~\Cref{thm:UCOisdetectability} by utilizing the fact that the system is anti-stable. 

To sum up, \Cref{thm:UCOisdetectability} states that uniform complete observability is the minimum requirement in order to obtain a uniformly exponentially stable observer error system by a bounded feedback gain for anti-stable systems. 
This can be extended to general systems with exponential dichotomy.
It can be assumed without loss of generality, that the system is already in block diagonal form
\begin{subequations}\label{eq:blockdiag_detectability}
\begin{align}\label{eq:sys_blockdiag_detectability}
\begin{bmatrix}
\dot \z_1(t) \\
\dot \z_2(t) 
\end{bmatrix} &= \begin{bmatrix}
\D_1(t) & \vec 0 \\
\vec 0 & \D_2(t) 
\end{bmatrix} \begin{bmatrix}
\z_1(t) \\ \z_2(t) 
\end{bmatrix} \\
\y(t) &= \begin{bmatrix}
\C_{1}(t) & \C_{2}(t)
\end{bmatrix}\begin{bmatrix}
\z_1(t) \\ \z_2(t) 
\end{bmatrix}
\end{align}
\end{subequations}
with $\z_2(t)\in\mathds{R}^k$. 
It is assumed that~\eqref{eq:sys_blockdiag_detectability} has an exponential dichotomy with the projection matrix $\P$ in the form of~\eqref{eq:Pidentity} and matrices $\D_1(t)$, $\D_2(t)$, $\C_{1}(t)$ and $\C_{2}(t)$ of appropriate dimension.
The following result follows from~\Cref{thm:UCOisdetectability}.
\begin{proposition}\label{thm:DetectabilityDiagonal}
	System~\eqref{eq:blockdiag_detectability} is uniformly exponentially detectable, if and only if the pair $(\D_1(t),\C_{1}(t))$ is uniformly completely observable.
\end{proposition}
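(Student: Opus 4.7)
The plan is to prove sufficiency by a direct observer construction and necessity by contradiction, mirroring the proof of \Cref{thm:UCOisdetectability}. For sufficiency, assume $(\D_1(t),\C_1(t))$ is uniformly completely observable. The upper subsystem $\dot{\z}_1 = \D_1(t)\z_1$ inherits an exponential dichotomy with projection $\vec 0_{(n-k)\times(n-k)}$ from the block-diagonal form of $\P$ in \eqref{eq:Pidentity}, so it is anti-stable; \Cref{thm:UCOisdetectability} thus yields a uniformly bounded gain $\tilde{\L}_1(t)$ such that $\D_1(t) - \tilde{\L}_1(t)\C_1(t)$ is uniformly exponentially stable. Choosing $\L(t) = [\tilde{\L}_1^\transp(t),\, \vec 0]^\transp$ for the full system makes the closed-loop matrix block upper triangular with both diagonal blocks $\D_1 - \tilde{\L}_1\C_1$ and $\D_2$ uniformly exponentially stable (the latter by the exponential dichotomy with $\P_2 = \I_k$) and bounded off-diagonal coupling $-\tilde{\L}_1\C_2$; the standard cascade argument for such triangular systems then gives uniform exponential stability of the error dynamics.

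For necessity, I would argue by contradiction. Assume a bounded gain $\L(t) = [\L_1^\transp(t),\, \L_2^\transp(t)]^\transp$ makes the full error system uniformly exponentially stable, but $(\D_1,\C_1)$ is not uniformly completely observable. For parameters $\sigma,\beta_1 > 0$ to be fixed later, pick $t_0 \in \mathds J$ and $\bm\eta \neq \vec 0$ witnessing the negated Gramian inequality for $(\D_1,\C_1)$, and set $\bm\xi = [\bm\eta^\transp,\, \vec 0^\transp]^\transp$. Block-diagonality of the open-loop transition matrix gives $\C(s)\mtPhi(s,t_0)\bm\xi = \C_1(s)\mtPhi_{\D_1}(s,t_0)\bm\eta$, so $\bm\xi$ also witnesses smallness of the open-loop Gramian of $(\D,\C)$. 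The quadratic-form Gramian comparison underlying \Cref{thm:UCOpreservation} then transfers this smallness along the same direction $\bm\xi$ to the closed-loop Gramian, yielding
\begin{equation*}
\integ{s}{t_0}{t_0+\sigma}{\|\C(s)\mtPhi_e(s,t_0)\bm\xi\|^2} \leq c\,\beta_1 \|\bm\eta\|^2
\end{equation*}
for some constant $c$ independent of $\sigma,\beta_1,\bm\eta$.

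The contradiction is then obtained as in the proof of \Cref{thm:UCOisdetectability}. Applying the variational identity to $\mtPhi_e(t_0+\sigma,t_0)\bm\xi$ and projecting onto the first $n-k$ coordinates eliminates the unbounded backward propagation $\mtPhi_{\D_2}(t_0,\cdot)$ arising from the stable block and leaves only $\mtPhi_{\D_1}(t_0,\cdot)$ together with the top block $\L_1(\cdot)$ of the gain. Anti-stability of $\D_1$ yields the uniform bound $\|\mtPhi_{\D_1}(t_0,s)\| \leq K$ on $[t_0,t_0+\sigma]$; bounding the left-hand side via uniform exponential stability of the error system and the right-hand side via Schwarz's inequality together with the displayed Gramian estimate produces an inequality of the form $K K_e e^{-\mu\sigma}\|\bm\eta\| \geq \|\bm\eta\| - K K_L \sqrt{c\beta_1\sigma}\,\|\bm\eta\|$. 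Choosing $\sigma$ large enough so that $K K_e e^{-\mu\sigma} \leq \tfrac{1}{3}$ and then $\beta_1$ small enough so that $K K_L \sqrt{c\beta_1\sigma} \leq \tfrac{1}{3}$ produces the contradiction $\tfrac{1}{3}\|\bm\eta\| \geq \tfrac{2}{3}\|\bm\eta\|$.

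The main obstacle I expect is the transfer of non-observability from $(\D,\C)$ to the closed-loop pair $(\D - \L\C,\C)$ \emph{along the specific structured direction} $\bm\xi = [\bm\eta^\transp,\, \vec 0^\transp]^\transp$; \Cref{thm:UCOpreservation} as stated provides only the qualitative equivalence of uniform complete observability, so the quadratic-form Gramian comparison (which in fact holds with constants depending on $\|\L\|_\infty$, $\sigma$ and the bound on $\D$) has to be extracted from its proof rather than used as a black box.
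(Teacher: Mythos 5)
Your proof is correct, and its sufficiency half coincides with the paper's: the gain $\L(t)=[\L_1^\transp(t)\;\vec 0^\transp]^\transp$ obtained from \Cref{thm:UCOisdetectability} yields a block upper triangular error system with uniformly exponentially stable diagonal blocks and bounded coupling $-\L_1\C_2$, which the paper closes by citing a cascade stability result. The necessity half, however, takes a genuinely different route. The paper restricts attention to trajectories with $\z_2(t_0)=\vec 0$, collapses the system to the anti-stable pair $(\D_1(t),\C_1(t))$, and invokes \Cref{thm:UCOisdetectability} directly; this is very short, but it leaves implicit why uniform exponential detectability of the full system --- whose gain may have a nonzero lower block $\L_2$, so that $\e_2=\vec 0$ is not invariant for the coupled error dynamics --- passes down to the restricted subsystem. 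You instead rerun the contradiction argument of \Cref{thm:UCOisdetectability} on the full closed-loop transition matrix, exploiting the block structure exactly twice: to produce the structured witness $\bm\xi=[\bm\eta^\transp\;\vec 0^\transp]^\transp$, whose open-loop output energy equals that of $(\D_1,\C_1)$, and to project the variational identity onto the first $n-k$ coordinates so that only the bounded backward propagator of $\D_1$ and the top block $\L_1$ of the gain appear, avoiding the unbounded backward growth of the stable block. The price is that you need the quantitative, direction-wise Gramian comparison underlying \Cref{thm:UCOpreservation} rather than its qualitative statement --- you correctly identify this as the crux, since a generic non-observability witness for the closed-loop pair could have a negligible first-block component and would ruin the lower bound $\|\bm\eta\|-KK_L\sqrt{c\beta_1\sigma}\|\bm\eta\|$. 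What this buys is a self-contained necessity proof that does not rely on the restriction step. One cosmetic slip: the constant $c$ in your closed-loop Gramian estimate does depend on $\sigma$ through the Gronwall factor, contrary to your claim that it is independent of $\sigma$; since you fix $\sigma$ before shrinking $\beta_1$, the order of quantifiers is unaffected and the contradiction still goes through.
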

\begin{proof}
	For necessity, assume that the pair $(\D_1(t),\C_1(t))$ is not uniformly completely observable.
	Let the initial condition for $\z_2$ be $\z_2(t_0)=\vec 0$. 
	Hence, system~\eqref{eq:blockdiag_detectability} reduces to 
	$\dot{\z}_1(t) = \D_1(t)\z_1(t)$,  
	$\y(t) = \C_1(t)\z_1(t)$  
for $t\geq t_0$. 
This anti-stable system is not uniformly completely observable and hence not uniformly exponentially detectable, see~\Cref{thm:UCOisdetectability}.
For sufficiency, \change{assume that $(\D_1(t),\C_1(t))$ is uniformly completely observable and consider an observer of the form}
	\begin{subequations}\label{eq:obsv_sufficient}
		\begin{align}
			\dot{\hat{\z}}_1(t) &= \D_1(t)\hat{\z}_1(t) + \L_1(t)\left[\y(t)-\change{\C(t)\hat\z(t)}\right], \\
			\dot{\hat{\z}}_2(t) &= \D_2(t)\hat\z_2(t).
		\end{align}
	\end{subequations}
		\change{The dynamics of the estimation error $\e(t)=\z(t)-\hat\z(t)$ can be derived as}
		\begin{subequations}\label{eq:errorsystem_partitioned}
		\begin{align}
		\dot \e_1(t) &= \left[\D_1(t)-\L_1(t)\C_1(t)\right]\e_1(t) \change{-} \L_1(t)\C_2(t)\change{\e_2(t)}\label{eq:e1} \\
		\dot \e_2(t) &=  \D_2(t)\change{\e_2(t)}.\label{eq:e2}
		\end{align}
		\end{subequations}
	Equation~\eqref{eq:e2} is uniformly exponentially stable. 
	According to~\Cref{thm:UCOisdetectability}, there exists a uniformly bounded feedback gain $\L_1(t)$ such that the unperturbed error system \mbox{$\dot \e_1(t) = \left[\D_1(t)-\L_1(t)\C_1(t)\right]\e_1(t)$} is uniformly exponentially stable. 
	\change{
	Hence, for the uniformly bounded feedback gain $\L(t) = [\L_1^\transp(t) \; \vc 0^\transp]^\transp$, the block triangular error system $\dot\e(t) = [\D(t)-\L(t)\C(t)]\e(t)$, i.e.,~\eqref{eq:errorsystem_partitioned}, is uniformly exponentially stable, see~\cite[Theorem 2]{Zhou2016}.
	Thus, system~\eqref{eq:blockdiag_detectability} is uniformly exponentially detectable.
	}
\end{proof}
The transformation~\eqref{eq:EDtransf1}, which brings system~\eqref{eq:systemED} to block diagonal form is hard to obtain in practice, because it requires the \change{unbounded} solution of the fundamental matrix differential equation.
The transformation to triangular form requires only to solve the orthogonal differential equation~\eqref{eq:fullqr2}. 
As already shown, the projection matrix $\P$ is preserved under the transformation to triangular form. 
A relation of block triangular systems to the corresponding reduced block diagonal form is presented in the following.

\begin{proposition}[exponential dichotomy of block triangular systems]\label{thm:EDblocktriu}
	Let a system $\dot \x(t) = \B(t)\x(t),\;\x(t)\in\mathds R^n$ have a block triangular structure partitioned according to 
	\begin{equation}\label{eq:EDblocktriu}
	\begin{bmatrix}\dot \x_1(t)\\ \dot \x_2(t) \end{bmatrix} = \begin{bmatrix}\B_{11}(t) & \B_{12}(t) \\ \vec 0 & \B_{22}(t) \end{bmatrix} \begin{bmatrix}\x_1(t) \\ \x_2(t) \end{bmatrix}.
	\end{equation}
	The block matrices $\B_{11}(t)$, $\B_{12}(t)$ and $\B_{22}(t)$ are of dimension $(n-k)\times(n-k)$, $(n-k)\times k$ and $k\times k$, respectively, with $1\leq k\leq n-1$.
	It is assumed that $\dot \x_1(t) = \B_{11}(t)\x_1(t)$ has an exponential dichotomy with $\P_1 = \vec 0$ and $\dot \x_2(t)=\B_{22} \x_2(t)$ has an exponential dichotomy with $\P_2 = \I_k$.
	Then, system~\eqref{eq:EDblocktriu} has an exponential dichotomy with the projection 
	\begin{equation}\label{eq:Pblock}
	\P = \begin{bmatrix} \vec 0 & \vec 0 \\ \vec 0 & \I_k\end{bmatrix}.
	\end{equation}
	Moreover, it is reducible to the block diagonal form~\eqref{eq:EDblockdiagonal} with $\D_1(t) = \B_{11}(t)$.
\end{proposition}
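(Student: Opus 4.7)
The strategy is to reduce system~\eqref{eq:EDblocktriu} to block diagonal form by a Lyapunov transformation of the form
\[
\T(t) = \begin{bmatrix}\I_{n-k} & \vec L(t) \\ \vec 0 & \I_k\end{bmatrix},
\]
from which both assertions---the exponential dichotomy with projection~\eqref{eq:Pblock} and reducibility with $\D_1=\B_{11}$---follow at once. A direct computation shows that $\x=\T\z$ transforms~\eqref{eq:EDblocktriu} into $\dot\z(t)=\operatorname{diag}(\B_{11}(t),\B_{22}(t))\z(t)$ exactly when $\vec L(t)$ solves the Sylvester-type differential equation $\dot{\vec L}=\B_{11}\vec L-\vec L\B_{22}+\B_{12}$.

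The central step is to produce a uniformly bounded solution of this Sylvester equation. I would use the explicit formula
\[
\vec L(t) = -\int_t^{\infty}\mtPhi_1(t,s)\B_{12}(s)\mtPhi_2(s,t)\,\mathrm{d}s,
\]
where $\mtPhi_1$ and $\mtPhi_2$ denote the state transition matrices of $\dot\x_1=\B_{11}\x_1$ and $\dot\x_2=\B_{22}\x_2$, respectively. The dichotomy of the anti-stable block gives $\|\mtPhi_1(t,s)\|\le K_1 e^{-\alpha_1(s-t)}$ for $s\ge t$, and uniform exponential stability of $\B_{22}$ gives $\|\mtPhi_2(s,t)\|\le K_2 e^{-\alpha_2(s-t)}$, so the integrand decays uniformly in $t$ and $\sup_t\|\vec L(t)\|$ is finite. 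Differentiation under the integral then confirms that $\vec L$ solves the Sylvester equation, whence $\dot{\vec L}$ is uniformly bounded as well.

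With $\vec L$, $\dot{\vec L}$, and therefore $\T$, $\T^{-1}$, $\dot\T$ uniformly bounded, $\T$ is a Lyapunov transformation and the above block-diagonalization immediately delivers the reducibility statement with $\D_1(t)=\B_{11}(t)$ and $\D_2(t)=\B_{22}(t)$. For the dichotomy claim, I would use the fundamental matrix $\X(t)=\T(t)\operatorname{diag}(\X_1(t),\X_2(t))$ of~\eqref{eq:EDblocktriu}, where $\X_1$ and $\X_2$ are fundamental matrices of the diagonal blocks. A block computation yields
\[
\X(t)\P\X^{-1}(t_0)=\T(t)\begin{bmatrix}\vec 0 & \vec 0\\ \vec 0 & \X_2(t)\X_2^{-1}(t_0)\end{bmatrix}\T^{-1}(t_0),
\]
and the analogous expression with $\I_n-\P$ puts $\X_1(t)\X_1^{-1}(t_0)$ in the upper-left block; combined with the dichotomy bounds of the two subsystems and the bounded factors $\T(t),\T^{-1}(t_0)$, this yields~\eqref{eq:EDstable} and~\eqref{eq:EDunstable} with the projection~\eqref{eq:Pblock}. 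The main obstacle is establishing convergence and regularity of the improper integral defining $\vec L$; once that is in place, the remainder is a direct block manipulation.
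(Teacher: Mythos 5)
Your proof is correct, but it takes a genuinely different route from the paper's. The paper first invokes an external result (Theorem~24 of the cited work by Dieci and Van Vleck) to obtain the dichotomy of the coupled system together with the specific fundamental matrix whose off-diagonal block is $\X_{12}(t)=-\X_{11}(t)\int_t^\infty\X_{11}^{-1}(\tau)\B_{12}(\tau)\X_{22}(\tau)\,\mathrm{d}\tau$, and then verifies $\D_1=\B_{11}$ by pushing that fundamental matrix through the Coppel reducibility construction $\S(t)=\X(t)\T^{-1}(t)$ with the symmetric positive definite square root from~\eqref{eq:EDtransf2}. You instead build the block-diagonalizing Lyapunov transformation directly, as a unipotent upper triangular matrix whose off-diagonal block is a uniformly bounded solution of the Sylvester differential equation $\dot{\vec L}=\B_{11}\vec L-\vec L\B_{22}+\B_{12}$, and then read off both the reducibility (with $\D_1=\B_{11}$, $\D_2=\B_{22}$) and the dichotomy of the coupled system from the decoupled one. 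The two constructions are in fact the same object in disguise: your $\vec L(t)$ satisfies $\vec L(t)\X_{22}(t)=\X_{12}(t)$, so your fundamental matrix $\T(t)\,\mathrm{diag}(\X_1(t),\X_2(t))$ coincides with the paper's. What your route buys is self-containedness --- it replaces the citation and the square-root computation by an elementary integral estimate using the two subsystem dichotomy bounds --- and it yields the cleaner conclusion $\D_2=\B_{22}$; what the paper's route buys is reuse of the general reducibility lemma it has already set up, whose transformation structure is then exploited again in the proof of the main theorem. Since your $\T(t)$ is also upper triangular with identity upper-left block, it would serve equally well there. The only point to make fully explicit is the justification for differentiating under the improper integral, which follows from the uniform exponential decay of the integrand and of its $t$-derivative (the latter because $\B_{11}$ and $\B_{22}$ are uniformly bounded); you correctly identify this as the crux and the estimates go through.
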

\begin{proof}
	According to~\cite[Theorem 24]{Dieci2010a}, system~\eqref{eq:EDblocktriu} has an exponential dichotomy with projection
	\begin{equation}\label{eq:Pblockprf}
	\P = \begin{bmatrix} \vec 0 & \vec 0 \\ \vec 0 & \I_k\end{bmatrix}
	\end{equation}
	and the fundamental matrix solution
	\begin{equation}\label{eq:fundtriu}
	\X(t) = \begin{bmatrix}  \X_{11}(t) & \X_{12}(t) \\ \vec 0 & \X_{22}(t) \end{bmatrix}.
	\end{equation}
	The matrices $\X_{11}(t)$ and $\X_{22}(t)$ are any non-singular fundamental matrix solutions of the systems \mbox{$\dot\x_1(t) =\B_{11}(t)\x_1(t)$} and $\dot\x_2(t)=\B_{22}(t)\x_2(t)$, respectively.
	The matrix $\X_{12}(t)$ is given by
	\begin{equation}
	\X_{12}(t) = -\X_{11}(t)\integ{\tau}{t}{\infty}{\X_{11}^{-1}(\tau)\B_{12}(\tau)\X_{22}(\tau)}.
	\end{equation}
	Hence, it remains to be shown that the transformation to the block diagonal form~\eqref{eq:EDblockdiagonal} does not change the upper block, i.e., \change{that} $\D_1(t) = \B_{11}(t)$. 
	The transformation matrix is given by~\eqref{eq:EDtransf1} with fundamental matrix solution~\eqref{eq:fundtriu}.
	It follows from~\eqref{eq:EDtransf2} with $\P$ as in~\eqref{eq:Pblockprf} that
	\begin{equation}
	\T^2(t)= \begin{bmatrix} \X_{11}^\transp(t)\X_{11}(t) &\vec 0 \\ \vec 0 & \N^\transp(t)\N(t) \end{bmatrix}
	\end{equation}
	with $\N^\transp(t)\N(t) = \X_{12}^\transp(t)\X_{12}(t) + \X_{22}^\transp(t)\X_{22}(t)$. 
	Hence, $\T(t)$ is block diagonal with $\T(t) = \text{diag}\,(\X_{11}(t),\N(t))$
	and the transformation matrix is given by
	\begin{equation}\label{eq:diag1}
\begin{aligned}
	\S(t) &= \X(t)\T^{-1}(t) 
	= \begin{bmatrix} \I_{n-k} & \X_{12}(t)\N^{-1}(t) \\ \vec 0 & \X_{22}(t)\N^{-1}(t)   \end{bmatrix}.
\end{aligned}
	\end{equation}
	Its inverse is given by
	\begin{equation}\label{eq:diag2}
	\renewcommand*{\arraystretch}{1.2}
	\S^{-1}(t) = \begin{bmatrix}\I_{n-k} & -\X_{12}(t)\X^{-1}_{22}(t) \\ \vec 0 & \N \X_{22}^{-1}(t) \end{bmatrix}
	\end{equation}
	and the time derivative of $\S(t)$ can be stated as
	\begin{equation}\label{eq:diag3}
	\renewcommand*{\arraystretch}{1.2}
	\dot \S(t) = \begin{bmatrix}\vec 0 & \dt{\phantom .}\left[\X_{12}(t)\N^{-1}(t)\right] \\ \vspace{0.1cm} \vec 0 & \dt{\phantom .}\left[\X_{22}(t)\N^{-1}(t)\right]\end{bmatrix}
	\end{equation}
	By a straightforward computation using~\eqref{eq:diag1},~\eqref{eq:diag2} and~\eqref{eq:diag3} in
$	\D(t) = \S^{-1}(t)\B(t)\S(t) - \S^{-1}(t)\dot\S(t),$
	one can see that the upper left block remains unchanged for an upper block triangular coefficient matrix and hence $\D_1(t)=\B_{11}(t)$.
\end{proof}
The previous results allow to state the proof of the main result.
\subsection{Proof of {{\Cref{thm:mainresult}}}}
	It follows from~\Cref{thm:EDblocktriu} that system~\eqref{eq:EDblocktriu2} is reducible to block diagonal form $\dot \z(t)=\D(t)\z(t)$ with $\D_1(t)=\B_{11}(t)$.
	The block diagonal system has an exponential dichotomy with the same projection matrix $\P$.
	It is shown in the proof of~\Cref{thm:EDblocktriu} that the transformation matrix is given by
	\begin{equation}
	\S(t) = \begin{bmatrix} \I_{n-k} & \S_{12}(t) \\ \vec 0 & \S_{22}(t)\end{bmatrix}.
	\end{equation}
	The matrices $\S_{12}(t)$ and $\S_{22}(t)$ are stated in~\eqref{eq:diag1}. 
	The important point in this proof is that 
	\begin{equation}
	\begin{aligned}
	\y(t) &=  \C(t)\S(t)\z(t) \\
	&= \begin{bmatrix}\C_1(t) & \C_2(t) \end{bmatrix} \begin{bmatrix}\I_{n-k} & \S_{12}(t) \\ \vec 0 & \S_{22}(t) \end{bmatrix}\begin{bmatrix}\z_1(t) \\\z_2(t) \end{bmatrix}\\
	&=\begin{bmatrix} \C_1(t) & \tilde\C_2(t)\end{bmatrix}\begin{bmatrix}\z_1(t) \\\z_2(t) \end{bmatrix},
	\end{aligned} 
	\end{equation}
	with $\tilde\C_2(t)=\C_1(t)\S_{12}(t)+\C_2(t)\S_{22}(t)$.
	The rest of the proof follows from the proof of~\Cref{thm:DetectabilityDiagonal}. \hfill\QED
\balance

\section{Conclusion and Outlook}\label{sec:conclusion}
This work presents necessary and sufficient conditions for uniform exponential detectability of systems, which admit an exponential dichotomy. 
In particular, the cases of diagonal, anti-stable, and upper block triangular systems are considered in detail.
The latter form is of particular importance, because it can be obtained in a numerically well-conditioned way by means of a continuous QR decomposition.

Future research aims at the combination of the presented detectability conditions with numerical tools for the stability analysis. 
\change{Moreover, it would be interesting to combine results on the robustness of uniform complete observability~\cite{Sastry1982} with the roughness property of exponential dichotomies to investigate similar properties in a robust framework.
This may also allow the extension of the ideas to certain classes of nonlinear systems.
}

\balance

\bibliographystyle{IEEEtran}
\bibliography{IEEEabrv,references}

\end{document}